\newcommand{\sgn}{\text{sgn}}
\newtheorem{theorem}{Theorem}[section]
\begin{document}

\title{Boundedness vs Unboundedness of A Noise Linked to Tsallis q-Statistics: The Role of The Overdamped Approximation}
\author{Dario Domingo$^{1,2}$, Alberto d'Onofrio$^{3}$, Franco Flandoli$^{2}$}
\date{}
\maketitle

\begin{abstract}
An apparently ideal way to generate continuous bounded stochastic processes
is to consider the stochastically perturbed motion of a point of small mass in an
infinite potential well, under overdamped approximation. 
Here, however, we show that the aforementioned procedure can be fallacious and lead to 
incorrect results.
We indeed provide a counter-example concerning one of the most employed bounded noises, 
hereafter called Tsallis-Stariolo-Borland (TSB) noise,
which admits the well known Tsallis q-statistics as stationary density.
In fact, we show that for negative values of the Tsallis parameter $q$ (corresponding 
to sufficiently large diffusion coefficient of the stochastic force), the motion
resulting from the overdamped approximation is unbounded. We then
investigate the cause of the failure of Kramers first
type approximation, and we formally show that the solutions of the
full Newtonian non-approximated model are bounded, following the physical intuition. Finally, we provide a new family of
bounded noises extending the TSB noise, the boundedness of whose solutions we
formally show.
\end{abstract}

\smallskip $^{1}$\textit{School of Mathematics, University of Leeds,
Woodhouse Lane, Leeds LS2 9JT (United Kingdom)\newline
}

\smallskip $^{2}$\textit{Department of Mathematics, University of Pisa,
Largo Bruno Pontecorvo 5, 56127 Pisa (Italy)}\newline

\smallskip $^{3}$\textit{International Prevention Research Institute, 95
Cours Lafayette, 69006 Lyon (France)\newline
}

\bigskip

\textit{Keywords:} Bounded noise; non-Gaussian noise; Tsallis q-statistics; 
Newton's equation; Overdamped approximation; Kramers equation; Tsallis-Stariolo-Borland noise; Potential well; 
Stochastic Differential Equations. 

\section{Introduction}
In mathematical biophysics, the influence of extrinsic sources of stochasticity in otherwise deterministic biological systems 
is very frequently taken into account in an elementary way. 
Indeed, the deterministic dynamical system (valid in the absence of the above-mentioned sources) 
is often perturbed by allowing one or more of its parameters to stochastically fluctuate via a white noise or a colored Gaussian perturbation.
This approach is very interesting and often allows to make analytical or partially-analytical inferences, but it can lead to artifacts, 
sometimes not perceived by modelers.\\
For example, as stressed in \cite{Alberto}, modeling the stochastic fluctuations affecting an anti-tumor therapy 
by means of a white noise means that the model allows the possibility that the therapy adds tumor cells instead of killing them: 
a very gross artifact. Indeed, by denoting with $Y$ the tumor size,
with $r(Y)$ its net growth rate, and with $\theta>0$ the anti-tumor cytotoxic therapy, one gets the following mathematical model:
\begin{equation}
dY = r(Y)dt - \theta Y(dt+ \sigma dB) .
\end{equation}
This implies that, in the realizations of the stochastic process, quite frequently the term $\theta Y(dt+ \sigma dB) $ is negative:
in biological terms, tumor cells would be added!
Moreover, this modeling approach also allows an excessive instantaneous growth of the therapy term, 
which is another equally important biological artifact. The same problems hold if one models the perturbation by an Ornstein-Uhlenbeck (OU) colored noise.\\
Finally, white noise perturbations can only be applied to parameters on which a system depends linearly, thus severely limiting the scope of stochastic perturbations. 
By adopting OU noises the scenario only slightly improves. \\
An alternative strategy which is becoming increasingly important consists in modeling parametric perturbations by bounded noises (see \cite{Alberto} and references therein). 
Indeed, bounded-noise perturbations allow to maintain the positiveness and boundedness of perturbed parameters, 
and also allow to model the fluctuations of all parameters of a model, 
even those on which a system depends nonlinearly.\\
Of course a key factor in bounded-noise based stochastic modeling is the characterization 
of the stationary probability density function (PDF) of the stochastic bounded perturbation.
As far as this fundamental issue is concerned, from a statistical physics point of view, a natural and important extension 
of the Gaussian PDF is the Tsallis q-statistics, which is at the basis of the non-extensive statistical mechanics \cite{TsallisReview}. 
Indicating with $Z$ a random variable that follows the Tsallis q-statistics, its PDF is as follows \cite{TsallisReview,PratoTsallis,Tsallis1995}:
\begin{equation}
\rho_q(z) = A_q \left( 1 - \frac{1-q}{3-q} \left(\frac{z}{\sigma}\right)^2 \right)^{1/(1-q)},
\end{equation}
where $\sigma>0$, $q$ is a real number smaller than $3$, and $A_q$ is the normalization constant. 
This distribution has  some noteworthy properties \cite{TsallisReview}:
\begin{enumerate}[i)]
 \item $\lim_{q\rightarrow 1} \rho_q(z) = N(0,\sigma^2 (3-q)/2 )$;
 \item  for $q>1$, the distribution for $z \gg \sigma$ is a power law (with diverging second moment if $q \in (5/3,3)$);
 \item for $q<1$, $Z$ is bounded, with indeed $\rho_q(z)$  equal to zero outside $\left(-\sigma \sqrt{\frac{3-q}{1-q}}, \sigma \sqrt{\frac{3-q}{1-q}}\right)$;
 \item for $ q \rightarrow -\infty$, the distribution of $z$ tends to the uniform distribution in $(-1,1)$.
\end{enumerate}
Thus, the Tsallis thermostatistical theory is not only able to unify the Gaussian and power-law Levy behaviors, 
as stressed in \cite{TsallisReview}, but it also describes an important class of bounded stochastic behaviors.\\
Stariolo \cite{Stariolo} (see also \cite{Borland}) investigated the problem of identifying an overdamped 
stochastic dynamics in the phase space that has the Tsallis q-distribution as equilibrium PDF. 
For the sake of precision, Stariolo defined a quite general family of SDEs 
that depend on symmetric unspecified potentials $V(x)$ and leading to a generalization of the Tsallis q-statistics. 
Namely, the Tsallis q-statistics is obtained in the case of quadratic $V(x)$.\\
For such quadratic potentials, the resulting non-Gaussian bounded process (as well as the case $q>1$) 
has been investigated in a series of influential papers \cite{WioToral,FuentesSR,WioSR,WioPath,Don2010,Baura} 
showing that the departure from the Gaussian PDF in the noise induces remarkable effects in 
noise-induced transitions and in stochastic resonance \cite{WioToral,Wio,Don2010,FuentesSR,WioSR}. 
This process is sometimes called Tsallis-Borland process \cite{Alberto}, although it should be more precisely 
called the Tsallis-Stariolo-Borland (TSB) process, as we will do in the following.\\
The Stariolo family of SDEs can be put in an even more general framework: 
the motion of a material point of position $y$ in overdamped regime (thus neglecting the effect of the mass),
under the action of a deterministic force $F(y)$ and a stochastic white noise force $\xi(t)$:
\begin{equation} 
\dot y = F(y) + \eta \xi(t),  
\end{equation}
where $\eta$ is a positive constant, $F(y)$ is such that 
\begin{equation}
 \lim_{y\rightarrow -1^+} F(y)=+\infty \quad \text{and} \quad \lim_{y\rightarrow 1^-} F(y)=-\infty \,,
\end{equation}
and the potential $U(y)$ associated with $F(y)$ is such that
\begin{equation}
\lim_{y\rightarrow \pm 1}U(y)=+\infty.  
\end{equation}
This suggests that an ideal physical `recipe' to generate bounded noises is to consider the overdamped motion of a point
in a potential well of infinite height, under the perturbation of a
stochastic external ``white noise'' force. This is a particular limit case of
the classical problem of statistical physics studied by
Kramers in its hugely influential paper published in 1940 \cite{Kramers}.
\newline
After recalling in Section~\ref{Sec_basic} the basic physical interpretation of the TSB process, 
we formally prove in Section~\ref{Sec_q<0} that the above-mentioned recipe, quite full of appeal, can be fallacious:
we show that the non-Gaussian TSB process undergoes a stochastic bifurcation at $q=0$, 
with the process being in fact unbounded for $q<0$. 
In Section~\ref{Sec_second_order}, however, we prove that the associated paradoxical physical scenario 
is only apparent, since  by taking into account the mass of the point the resulting motion remains bounded. 
In other words, the unboundedness of the TSB stochastic process for $q<0$ 
is a mathematical artifact caused by the overdamped approximation.
Finally, in Section~\ref{Sec_alpha_family}, we propose a family of deterministic forces that generalize the TSB noise and that induce bounded motions
in the overdamped approximation (as well as in the non-approximate case).

\section{Basic notions}\label{Sec_basic}
Let us consider a material point $P$ of mass $m$ and position $(x, y, z)$ on
which 1D forces act along the $x$-axis, and that at time $t=0$ is not moving
in the $(y,z)$ plane. Thus, its subsequent motion will only be along the
$x$-axis with $(y(t),z(t))=(y(0),z(0))$ for all $t\geq 0$. 
Suppose the forces acting on $P$ be as follows:
\begin{enumerate}[i)]
 \item A linear viscous force : 
\begin{equation}
F_v (t) = -\gamma \dot x (t).
\end{equation}
For the sake of notation simplicity we set henceforward: 
$$ \gamma = 1. $$
\item A stochastic white noise force: 
\begin{equation}
F_s(t) =  \sqrt{2\beta}\,\xi(t).
\end{equation}
\item\label{cons} A conservative force: 
\begin{equation}
F(x) = F_l(x) + F_r(x), 
\end{equation}
where $F_l(x)$ and $F_r(x)$ are two repulsive forces centered, respectively,
at $x = -1$ and at $x = 1$, and such that their potentials
(denoted as $U_l(x)$ and $U_r(x)$) are infinite at the respective centres of
repulsion. In other words, we require that:
\begin{equation}
\lim_{ x\rightarrow -1^{-}}F_l(x) = -\infty, \quad \lim_{ x\rightarrow -1^{+}}F_l(x) = +\infty  ;
\end{equation}
\begin{equation}
\lim_{ x\rightarrow 1^{-}}F_r(x) = -\infty, \quad \lim_{ x\rightarrow 1^{+}}F_r(x) = +\infty ;
\end{equation}
\vspace{-1mm}
\begin{equation}
\lim_{ x\rightarrow -1}U_l(x) = +\infty, \quad \lim_{ x\rightarrow 1}U_r(x) = +\infty .
\end{equation}
\end{enumerate}
The Newton's equation for the motion of $P$ thus reads as follows: 
\begin{equation}\label{newton1}
m \ddot x = - \dot x  + F_l(x) + F_r(x) + \sqrt{2\beta}\xi(t) ,
\end{equation}
If the mass of the point P is much smaller than the viscous constant rate $\gamma$, i.e. in our notation: 
\begin{equation}
m \ll 1 ,
\end{equation}
then one can adopt the first type Kramers approximation
by neglecting the contribution of the acceleration. This yields: 
\begin{equation}  \label{Kapp}
\dot x  = F_l(x) + F_r(x) + \sqrt{2 \beta}\xi(t).
\end{equation}
As a consequence, the first order SDE (\ref{Kapp}) is an apparently
excellent simple model to define a bounded noise $x(t)$: a stochastically
perturbed material point $P$ moving in an infinite potential well and  subject to strongly viscous force, thus remaining confined in $(-1,1)$. \newline
For example, assuming: 
\begin{equation}  \label{FlTSBE}
F_l(x) =  \frac{\sgn(x+1)}{|x+1|}
\end{equation}
and
\begin{equation}  \label{FrTSBE}
F_r(x) =  \frac{\sgn(x-1)}{|x-1|}
\end{equation}
one gets the Tsallis-Stariolo-Borland Equation (TSBE) \cite{Stariolo,Borland, Alberto,WioToral,FuentesSR,WioSR,WioPath,Don2010}: 
\begin{equation}  \label{TSBE}
\dot x  = \frac{-2x}{1-x^2} + \sqrt{2 (1-q)}\xi(t),
\end{equation}
where $q$ is a real parameter smaller than 1.

\section{Unboundedness of the TSB Noise for $q<0$}\label{Sec_q<0}

Equation (\ref{TSBE}) is a SDE of the form
\begin{equation}\label{SDE}
 \dot x  = \varphi(x) + \sigma(x) \, \xi(t)\,,
\end{equation}
where the drift and the (constant) diffusion are given by: 
\begin{equation}\label{phi}
\varphi(x)= -\frac{2x}{1- x^2}\,,   \qquad \sigma(x) \equiv \sqrt{2 \,(1-q)}\,, \;\; q \leq 1\,.
\end{equation}
Consider the process $x(t)$ solution of~(\ref{TSBE}), with deterministic initial condition $x_0 \in I=(-1,1)$.
Based on the physical model that generates TSBE, the process $x$ apparently
satisfies $x(t) \in I$ for all times $t\geq 0$.
In order to carry out a formal investigation of this point, it is convenient to introduce the first exit time of the process from $I$,
denoted as $T(x_0)$:
\begin{equation}\label{T(x_0)}
T(x_0):=\inf \left\{ t>0 \;\big|\;\, x(t)\notin I, \; x(0)=x_0 \right\}\,. 
\end{equation}
The aim is to study the conditions, if any, under which the random time $T(x_0)$ is almost surely infinite.
\\
Some regularity conditions on the coefficients of the SDE, such as the positiveness of $\sigma^2$ and the local integrability
of $(1+|\varphi|)/\sigma^2$, assure that the process leave any compact subinterval $[a,b] \subseteq I$ with probability one \cite{KS}. 
Note that, in the case of TSBE, both conditions are fulfilled. We can therefore consider the almost surely \textit{finite} random time
\begin{equation}\label{Tab}
T_{a,b}(x_0):=\inf \left\{ t>0 \;\big|\;\, x(t)\notin [a,b], \; x(0)=x_0 \right\}
\end{equation}
and study the case $a\rightarrow -1$, $b \rightarrow 1$.
Of course, the process $x$ at the time $T:=T_{a,b}(x_0)$ will always satisfy either $x(T)=a$ or $x(T)=b$. 
The probabilities of these two mutually exclusive events can be expressed as follows \cite{KS}:
\begin{equation} \label{P(a)}
P\big( x(T)=a \big) = \frac{s(b) - s(x_0)}{s(b) - s(a)}\,, \qquad 
P\big( x(T)=b \big) = \frac{s(x_0) - s(a)}{s(b) - s(a)}\,,
\end{equation}
where $s(x)$ is the \emph{scale function} associated with the process $x(t)$. 
For a process satisfying an SDE as~(\ref{SDE}), the scale function takes the form
\begin{equation}  \label{s}
s(x)= \int_c^x \exp \left( - \int_c^y 2\:\frac{\varphi(z)}{\sigma^2(z)}\;
dz\;\right)\,dy\,, \quad x\in I\,,
\end{equation}
where $c$ is any point in $I$. Notice that, despite being $s(x)$ dependent on the choice of $c$,
the RHSs of~(\ref{P(a)}) are both independent of it. 
Indeed, the relationship between two scale functions $s_1$ and $s_2$
obtained by choosing different constants $c$ is of the type ${s_2}(x) = \alpha \,s_1(x) + \beta$,
for some constants $\alpha, \beta$.
\\
In the case~(\ref{phi}) of TSBE, the scale function where $c=0$ thus reads as follows:
\begin{equation}\label{s_TSBE}
s(x)=\int_0^x {\left( 1- z^2 \right)}^{-1/(1-q)}\,dz\,. \vspace{1mm} 
\end{equation}
In particular, notice that the value of $q$ affects
whether $| s(\pm 1)|$ is finite or infinite.
The following result about the behavior of TSBE near the boundaries in the case $q \geq 0$ can now be proved.
\begin{theorem}\label{q>0}
 Consider TSBE~(\ref{TSBE}) with $q \in [0,1]$. Then, for any initial condition $x_0 \in I=(-1,1)$, the solution
 $x(t)$ remains in $I$ for all times $t\geq 0$, with probability one.
 In terms of the random time $T(x_0)$ introduced in~(\ref{T(x_0)}), we can write
 $$
 P\big( T(x_0) = \infty \big) =1\,.
 $$
\end{theorem}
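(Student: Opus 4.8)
The plan is to read the answer off the scale function~(\ref{s_TSBE}) through the classical boundary analysis of one-dimensional diffusions~\cite{KS}. Since the regularity conditions on the coefficients~(\ref{phi}) have already been checked to hold throughout $I$, the behaviour of the process at the endpoints $\pm1$ is governed entirely by whether $|s(\pm1)|$ is finite: if $s(-1^+)=-\infty$ and $s(1^-)=+\infty$, then neither endpoint is attainable, and consequently $P\big(T(x_0)=\infty\big)=1$. Thus the whole proof reduces to establishing $s(\pm1)=\pm\infty$ when $q\in[0,1)$, plus a separate look at the degenerate value $q=1$.

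For the estimate, observe that the integrand in~(\ref{s_TSBE}) is even, so $s$ is odd and it is enough to analyse $s(1^-)=\int_0^1(1-z^2)^{-1/(1-q)}\,dz$. As $z\to1^-$ this integrand is asymptotic to $2^{-1/(1-q)}(1-z)^{-1/(1-q)}$, and for $q\in[0,1)$ the exponent $1/(1-q)$ is at least $1$, so the integral diverges --- logarithmically at $q=0$ (where $s(x)=\tfrac12\ln\tfrac{1+x}{1-x}$) and as a power for $q\in(0,1)$. Hence $s(1^-)=+\infty$ and, by oddness, $s(-1^+)=-\infty$, which proves the theorem for $q\in[0,1)$. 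If a self-contained route is preferred, one can instead pass to the limit directly in~(\ref{P(a)}): with $a_n\downarrow-1$, $b_n\uparrow1$ and $T_n:=T_{a_n,b_n}(x_0)$, the divergences $s(a_n)\to-\infty$ and $s(b_n)\to+\infty$ give $P\big(x(T_n)=a_n\big)\to0$ and $P\big(x(T_n)=b_n\big)\to0$; since $T_n\uparrow T(x_0)$ and $x$ is continuous up to its first exit from $I$, a monotone-limit argument then upgrades this to $P\big(T(x_0)<\infty\big)=0$.

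Finally, the value $q=1$ is degenerate: there $\sigma\equiv0$ and~(\ref{TSBE}) becomes the deterministic equation $\dot x=-2x/(1-x^2)$, whose solutions started in $I$ decrease in absolute value toward $0$ and therefore never leave $I$. I expect the asymptotic analysis of~(\ref{s_TSBE}) to be routine --- its only real content being that $q=0$ is exactly the threshold at which $|s(\pm1)|$ becomes infinite, which is precisely why boundedness holds for $q\ge0$ and, as the results below will show, fails for $q<0$. The one step deserving care is the passage from the vanishing of the exit probabilities to the almost-sure statement $P\big(T(x_0)=\infty\big)=1$, i.e. invoking the boundary classification of~\cite{KS} correctly.
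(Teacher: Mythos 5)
Your primary route is sound and is essentially the paper's: everything hinges on $s(\pm 1)=\pm\infty$ for $q\in[0,1)$, which you verify correctly from~(\ref{s_TSBE}) (and your separate treatment of the degenerate deterministic case $q=1$ is a reasonable addition). The only difference is that you outsource the key implication ``$|s(\pm1)|=\infty$ implies the endpoints are unattainable, hence $P\big(T(x_0)=\infty\big)=1$'' to the boundary classification in \cite{KS}, whereas the paper proves it directly: fixing $a$ and letting \emph{only} $b\uparrow 1$ in~(\ref{P(a)}) gives $P\big(\inf_{t<T(x_0)}x(t)\le a\big)=1$ for every $a$, and symmetrically $P\big(\sup_{t<T(x_0)}x(t)\ge b\big)=1$ for every $b$, so both extremes equal $\mp1$ almost surely; but on $\{T(x_0)<\infty\}$ the continuous path exits at exactly one endpoint, so only one of the events $\{\inf=-1\}$, $\{\sup=+1\}$ can occur there, and additivity forces $P\big(T(x_0)<\infty\big)=2\,P\big(T(x_0)<\infty\big)$, i.e.\ zero. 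Citing \cite{KS} for the classification step is legitimate, so your first route stands as a proof, just less self-contained than the paper's.

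The ``self-contained route'' you sketch, however, is wrong. With $a_n\downarrow-1$ and $b_n\uparrow1$, the two quantities $P\big(x(T_n)=a_n\big)$ and $P\big(x(T_n)=b_n\big)$ sum to $1$ for every $n$, so they cannot both tend to $0$. From~(\ref{P(a)}) their limits depend on the relative rates of divergence of $s(a_n)$ and $s(b_n)$; in the symmetric choice $a_n=-b_n$ (recall $s$ is odd) each tends to $1/2$, not $0$. Moreover, even if some exit-location probability did vanish, that by itself says nothing about the finiteness of $T(x_0)$ without further argument. The correct way to exploit the divergence of $s$ is to move one end of the window at a time, which is exactly the paper's argument recalled above; if you want a self-contained proof, replace your simultaneous limit by that one.
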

\begin{proof}
Consider any compact subinterval $[a,b] \subseteq I$ containing $x_0$.
Up to time $T_{a,b}(x_0)$ (eq.~(\ref{Tab})), a strong solution of~(\ref{TSBE}) exists and is unique, because of the
Lipschitz property of its coefficients in $[a,b]$.
Such solution can also be uniquely extended up to the explosion time $T(x_0)$, because the coefficients remain
locally Lipschitz on the whole interval $I$.
\\
Now observe that
\begin{equation}
P\Big( x\big(T_{a,b}(x_0)\big) =a \Big) \leq P \left( \inf_{t < T(x_0)} x(t) \; \leq a \right),
\end{equation}
i.e.
\begin{equation}\label{P_ab}
\frac{s(b) - s(x_0)}{s(b) - s(a)} \leq P \left( \inf_{t < T(x_0)} x(t) \; \leq a \right)
\end{equation}
given the first identity in~(\ref{P(a)}).
Also, from~(\ref{s_TSBE}),
\begin{equation}
s(1) = \int_0^1 {\left( 1- z^2 \right)}^{-1/(1-q)}\,dz = \infty \quad \text{ since } q\geq0 \,.
\end{equation}
Therefore, by letting $b$ tend to $1$ in~(\ref{P_ab}), we get $P \big( \inf_{t < T(x_0)} x(t) \leq a \big) = 1$.
Similarly, one has
$
P \big( \sup_{t < T(x_0)} x(t) \geq b \big) =~1$, since $s(-1)=-\infty$.
Since this holds for any $a,b \in I$, we have actually shown that
\begin{equation}\label{1}
P \left( \inf_{t < T(x_0)} \!\!\!x(t) \, = -1 \right) = P \left( \sup_{t < T(x_0)} \!\!\!x(t) \, = 1 \right) = 1
\end{equation}
under the hypothesis $q \geq 0$.\\
We can now show that the event $A :=\{ T(x_0) < \infty \}$ has null probability.
Indeed, on $A$, we can either have that $\inf_{t < T(x_0)} x(t) =-1$ or that $\sup_{t < T(x_0)} x(t) =1$.
Therefore, given~(\ref{1}), we have
\begin{equation}
P(A) = P\left( A\; \cap \; \left\{  \inf_{t < T(x_0)} \!\!\!x(t) \, = -1 \right\} \right) +
       P\left( A\; \cap \; \left\{  \sup_{t < T(x_0)} \!\!\!x(t) \, = 1 \right\} \right) = P(A) + P(A) \,.
\end{equation}
The only way this can happen is that $P(A)=0$, which is equivalent to $P\big( T(x_0) = \infty \big) =1$. This completes the proof.
\end{proof}
Theorem~\ref{q>0} confirms that, for small values of the noise $\sigma$ (the ones corresponding to $q \in [0,1]$, see~(\ref{phi})),
the solution of TSBE defines a bounded stochastic process, as intuitive. The same tools used in the proof of Theorem~\ref{q>0} 
do not allow however to draw any conclusion about the case $q<0$, where $|s(\pm 1)|$ are both finite. 
To study this case, it is useful to consider the mean exit time of the process from any $[a,b] \subseteq I$,
\begin{equation}\label{Mab}
M_{a,b}(x_0) = E \big[ T_{a,b}(x_0) \big] = \int_0^\infty P \big( T_{a,b}(x_0) > t \big) \,dt\,.
\end{equation}
If we denote by $p_{x_0}(t,x)$ the density of the process at time $t>0$, we can write the integrand on the RHS of~(\ref{Mab}) as
\begin{equation}
P \Big( T_{a,b}(x_0) > t \Big) = P \Big( x(t) \in (a,b) \Big) = \int_a^b p_{x_0}(x,t) \, dx\,.
\end{equation}
Therefore, we can exploit the Fokker-Planck equation for $p_{x_0}(x,t)$ to write down an ordinary differential equation for $M_{a,b}(x)$, view as function of $x$ only.
The Ordinary Differential Equation satisfied by $M_{a,b}(x)$ is as follows, cf.~\cite{Gardiner} for full details:
\begin{equation}\label{ODE_M}
 \varphi(x) \,M_{a,b}^\prime(x) + \frac{1}{2} \sigma^2(x)\,M_{a,b}^{\prime \prime}(x) = -1\,, \qquad x \in (a,b)\,.
\end{equation}
This is subject to the boundary conditions
\begin{equation}\label{bound_cond_M}
 M_{a,b}(a)= M_{a,b}(b) = 0\,,
\end{equation}
which immediately follows by the definition of $M_{a,b}$.

The analytic solution to the Boundary Value Problem~(\ref{ODE_M}) and (\ref{bound_cond_M}) is available, and reads as follows \cite{KS}:
\begin{equation}\label{M_int}
 M_{a,b}(x) = \int_a^b G_{a,b}(x,y) \; m(dy)\,,
\end{equation}
where $G_{a,b}$ is the following Green's function
\begin{equation}\label{Gab}
G_{a,b}(x,y) = \frac{\big[ s(x \land y) - s(a)   \big] \big[ s(b) - s(x \lor y) \big]}{s(b) - s(a)}\,, \qquad x,y \in [a,b]\,,
\end{equation}
and $m$ is the so-called \textit{speed measure} associated with the process (solution of the SDE with drift $\varphi$ e diffusion $\sigma$):
\begin{equation}  \label{m}
m(dy)=\frac{2 dy}{s^\prime(y)\sigma^2(y)}\,,\quad y \in I\,.
\end{equation}
Note again that the integrand of~(\ref{M_int}) does not depend on the particular choice of $c$ which has been made to define the scale function $s$ in~(\ref{s}).
In the Tsallis-Borland case, we can therefore choose $c=0$ and recover expression~(\ref{s_TSBE}) for $s(x)$.
We can now make a precise statement about the behavior of TSBE under the condition $q<0$.

\begin{theorem}\label{q<0}
 Consider TSBE~(\ref{TSBE}) with $q<0$. Then, for any initial condition $x_0 \in I=(-1,1)$, the solution
 $x(t)$ attains one of the boundaries of $I$ in finite time, with probability one.
 In other words,
 \begin{equation}
 P\big( T(x_0) < \infty \big) =1
 \end{equation}
 where $T(x_0):=\inf \left\{ t>0 \;\big|\;\, x(t)\notin I, \; x(0)=x_0 \right\}$ as in~(\ref{T(x_0)}).
\end{theorem}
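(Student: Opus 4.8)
The plan is to prove that the \emph{mean} exit time of the process from $I$ is finite, which immediately forces $P\big(T(x_0)<\infty\big)=1$. The whole argument rests on one observation that sets the regime $q<0$ apart from that of Theorem~\ref{q>0}: when $q<0$ one has $1/(1-q)\in(0,1)$, so \emph{simultaneously} (a) the scale function is bounded on $I$, since near $z=\pm1$ the integrand in~(\ref{s_TSBE}) behaves like $(1\mp z)^{-1/(1-q)}$ with integrable exponent, hence $s(\pm1)$ are finite; and (b) the speed measure is finite, since~(\ref{m}) and~(\ref{phi}) give $m(dy)=\tfrac{1}{1-q}(1-y^2)^{1/(1-q)}\,dy$ with $(1-y^2)^{1/(1-q)}\le 1$ on $I$, so $m(I)<\infty$.

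First I would bound $M_{a,b}(x_0)$ uniformly over compacts $[a,b]\subseteq I$ containing $x_0$. By monotonicity of $s$ and the explicit form~(\ref{Gab}) of the Green's function, for $x_0,y\in[a,b]$ we have $0\le s(x_0\land y)-s(a)\le s(1)-s(-1)$ and $0\le s(b)-s(x_0\lor y)\le s(1)-s(-1)$, whereas $s(b)-s(a)$ is nondecreasing in $[a,b]$ and tends to $s(1)-s(-1)>0$, so $s(b)-s(a)\ge\tfrac12\big(s(1)-s(-1)\big)$ once $[a,b]$ is large enough. Hence $G_{a,b}(x_0,y)\le 2\big(s(1)-s(-1)\big)$, and from~(\ref{M_int}), $M_{a,b}(x_0)\le 2\big(s(1)-s(-1)\big)\,m(I)=:C<\infty$, a bound independent of $a$ and $b$.

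Next I would pass to the limit $a\to-1^+$, $b\to1^-$. The random times $T_{a,b}(x_0)$ — each a.s.\ finite by the regularity of the coefficients recalled before~(\ref{Tab}) — increase as $[a,b]\uparrow I$; call their limit $T_\ast$. Since $[a,b]\subseteq I$ we have $T_{a,b}\le T(x_0)$, hence $T_\ast\le T(x_0)$. For the reverse inequality, fix a sequence $[a_n,b_n]\uparrow I$; on $\{T_\ast<T(x_0)\}$ the path $x$ is continuous at $T_\ast$, so $x(T_{a_n,b_n})\to x(T_\ast)\in I$, but $x(T_{a_n,b_n})\in\{a_n,b_n\}$ with $a_n\to-1$, $b_n\to1$ forces every limit point into $\{-1,1\}$, a contradiction. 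Thus $T_\ast=T(x_0)$ almost surely, and by monotone convergence $E\big[T(x_0)\big]=\lim_{a\to-1,\,b\to1}M_{a,b}(x_0)\le C<\infty$. A nonnegative random variable with finite expectation is a.s.\ finite, so $P\big(T(x_0)<\infty\big)=1$.

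The only genuinely delicate point is the limiting step: one has to check that $T_{a,b}(x_0)\uparrow T(x_0)$, i.e.\ that exhausting $I$ by compacts really recovers the exit time from the \emph{open} interval, and that the uniform bound $M_{a,b}\le C$ is what legitimizes exchanging the limit with the expectation. Once those are in place, the conclusion is driven entirely by the two elementary estimates $s(\pm1)\in\mathbb{R}$ and $m(I)<\infty$, both of which hold precisely because $q<0$. (Equivalently, one could invoke Feller's boundary-classification test: for $q<0$ each endpoint of $I$ is accessible — in fact regular — hence attained in finite time with probability one.)
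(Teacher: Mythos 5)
Your proof is correct and follows essentially the same route as the paper: for $q<0$ the scale function is finite at $\pm 1$ and the speed measure is finite, so the Green's-function formula~(\ref{M_int}) yields a finite mean exit time from $I$, whence $P\big(T(x_0)<\infty\big)=1$. The only difference is in presentation: the paper passes to the limit $a\to -1$, $b\to 1$ directly via the continuous extension $G_{-1,1}$ and evaluates $E\big[T(x_0)\big]$ as the limiting integral, whereas you obtain a uniform bound on $M_{a,b}(x_0)$ and justify $T_{a,b}(x_0)\uparrow T(x_0)$ with monotone convergence, making explicit a limiting step the paper treats informally.
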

\begin{proof}
Let us first recall from~(\ref{s_TSBE}) the expression of the scale function for TSBE as follows:
\begin{equation}\label{s(x)}
s(x)=\int_0^x \frac{dz}{ {\left( 1- z^2 \right)}^{1/\beta} }   \,,\qquad \beta=1-q>1\,.
\end{equation}
The assumption $q<0$ guarantees that both $|s(-1)|$ and $|s(+1)|$ are finite.
Therefore, we can extend Green's function $G_{a,b}$ in~(\ref{Gab}) to a maximal function $G_{-1,1}$ defined on the whole square ${[-1,1]}^2$:
\begin{equation}\label{G11}
G_{-1,1}(x,y) = \frac{\big[ s(x \land y) - s(-1) \big] \big[ s(1) - s(x \lor y) \big]}{2\,s(1)}\,, \qquad x,y \in [-1,1]\,.
\end{equation}
$G_{-1,1}$ is continuous on ${[-1,1]}^2$, with in particular $G(x_0,-1) = G(x_0,+1)=0$.
The average exit time from $I$ of the process $x(t)$ starting at $x_0$ can then be obtained from~(\ref{M_int}), 
by letting $a$ tend to $-1$ and $b$ tend to $1$. We have:
\begin{align}
  E\big[ T(x_0)\big] &= \int_{-1}^1 G_{-1,1}(x_0,y) \; m(dy)\nonumber \\
                     &= \frac{1}{\beta}\int_{-1}^1 \frac{\,G_{-1,1}(x_0,y)\,}{s^\prime(y)} \,dy \nonumber\\
                     &= \frac{1}{\beta}\int_{-1}^1 {(1- y^2)}^{1/\beta} \, G_{-1,1}(x_0,y)\,dy \,. \label{E(Tx0)}
\end{align}
Given the continuity of last integrand for all $y \in [-1,1]$,
we deduce that $E\big[ T(x_0)\big] < \infty$. In particular, this assures that $T(x_0)$ is almost surely finite, as it was to be proved.
\end{proof}

Theorem~\ref{q<0} therefore proves that the Tsallis-Stariolo-Borland process reaches one of the endpoints~$\pm 1$ in finite
time with probability one, if $q<0$.
Once this happens, the loss of regularity of the coefficients does not
guarantee that the solution of the SDE can be extended in a unique way. A more
in-depth analysis would indeed show that the uniqueness is lost in this
case. There is, in fact, a positive probability that the process develop
outside the bounded interval $I$ after attaining one of the boundaries, with
a consequent dispersion of the initial mass on the whole real line.
We will not provide a formal proof of these facts here. 
Indeed, we think that the most important and unexpected
result has already been shown in Theorem~\ref{q<0}, and consists in the reachability of the
boundaries under the condition $q<0$.

In Subsection~\ref{average_exit_time} we estimate, as a function of $q<0$, the average time that the process needs
to attain one of the boundaries $\pm 1$. In particular, this will provide 
the order of the speed at which $E\big[ T(x_0)\big]$ tends to infinity as $q \rightarrow 0^-$.

\subsection{Average exit time}\label{average_exit_time}

Formula~(\ref{E(Tx0)}) was used in the proof of Theorem~\ref{q<0} to show that the expected exit time of the process from $(-1,1)$
is finite if $q<0$, by a trivial argument of continuity of the integrand. Recall that, in that formula, we put
\begin{equation}
 \beta= 1-q >1 \quad \text{if } q<0\,.
\end{equation}
In the following, the functional dependence of $E\big[ T(x_0)\big]$ on the parameter $q$ will be made explicit.
Without loss of generality, we consider the case where the process starts from $x_0=0$. We therefore have:
\begin{equation}
 E\big[ T(0)\big] = \frac{1}{\beta}\int_{-1}^1 {(1- y^2)}^{1/\beta} \,G_{-1,1}(0,y) \;dy\,.
\end{equation}
From~(\ref{s(x)}) it immediately follows that $s(x)$ is an odd function. Thus, by~(\ref{G11})
\begin{equation}
 G_{-1,1}(0,y) =  \frac{\big[ s(0 \land y) - s(-1) \big] \big[ s(1) - s(0 \lor y) \big]}{2\,s(1)} = \frac{1}{2} \big[ s(1) - s(| y |) \big]\,,
\end{equation}
and
\begin{align}\label{E(T0)}
  E\big[ T(0)\big] &= \frac{1}{2 \beta}\int_{-1}^1 {(1- y^2)}^{1/\beta} \, \big[ s(1) - s(| y |) \big] \;dy \nonumber \\
                   &= \frac{1}{  \beta}\int_{0}^1 {(1- y^2)}^{1/\beta} \, \big[ s(1) - s(y) \big] \;dy \nonumber\\
                   &= \frac{1}{  \beta}\int_{0}^1 {(1- y^2)}^{1/\beta} \, \bigg[ \int_y^1 \frac{dz}{ {\left( 1- z^2 \right)}^{1/\beta} } \bigg] dy\,,
\end{align}
given the expression of $s$ in~(\ref{s(x)}). Now observe that the condition $\beta>1$ implies
\begin{equation}\label{1+y}
 1 \leq {{(1+y)}^{1/\beta}} \leq 2 \qquad \text{for all $y \in [0,1]$}
\end{equation}
\begin{equation}\label{1+z}
 \frac{1}{2} \leq \frac{1}{{(1+z)}^{1/\beta}} \leq 1 \qquad \text{for all $z \in [0,1]$} \,.
\end{equation}
In particular, from~(\ref{1+z}), it follows that 
\begin{align}
 \int_y^1 \frac{dz}{{\left(1-z^2\right)}^{1/\beta}} &= \int_y^1 \frac{1}{{\left(1+z\right)}^{1/\beta}} \, \frac{dz}{{\left( 1-z \right)}^{1/\beta}}\nonumber \\
                                                    &= C_1(\beta,y) \int_y^1 \frac{dz}{ {\left( 1- z \right)}^{1/\beta} }  \nonumber\\
                                                    &= C_1(\beta,y) \frac{\beta}{\beta -1} \, {(1-y)}^{\frac{\beta -1}{\beta}}\,, \label{s(1)-s(y)} 
\end{align}
where
\begin{equation}\label{C(b)}
 C_1(\beta,y) \in \left[ \frac{1}{2}, 1 \right] \quad \text{ for all $y \in [0,1]$, $\beta>1$.}
\end{equation}
Given~(\ref{s(1)-s(y)}), the average exit time $E\big[ T(0)\big]$ in~(\ref{E(T0)}) takes the following form:
\begin{align}
E\big[ T(0)\big] &= \frac{1}{ (\beta-1)} \int_{0}^1 {(1- y^2)}^{1/\beta} \, C_1(\beta,y) \, {(1-y)}^{1 - \frac{1}{\beta} }\, dy \nonumber \\[1mm]
                 &= \frac{1}{ (\beta-1)} \int_{0}^1 {(1+y)}^{1/\beta} \, C_1(\beta,y) \, (1-y) \, dy \nonumber \\[1mm]
                 &= \frac{{C_2}(\beta)}{ (\beta-1)} \int_{0}^1 (1-y) \, dy \;=\; \frac{{C_2}(\beta)}{2 (\beta-1)} \,, 
                 \qquad {C_2}(\beta) \in \textstyle \left[ \frac{1}{2}, 2 \right]
\end{align}
The bounds for $C_2(\beta)$ immediately follow by~(\ref{1+y})~and~(\ref{C(b)}). In terms of the original parameter $q=1-\beta <0$, we have
\begin{equation}\label{estim_average_time}
 E\big[ T(0)\big] = - \frac{C(q)}{q} \,, \qquad \frac{1}{4} \leq C(q) \leq 1\,.
\end{equation}
In particular, expression~(\ref{estim_average_time}) allows to deduce the asymptotic behavior of the average exit time 
in the two cases $q \rightarrow 0^-$ and $q \rightarrow - \infty$.
\begin{itemize}
 \item If the negative parameter $q$ approaches zero, then the average time needed to attain one of the boundaries tends to infinity,
 linearly in $1/|q|$:
 \begin{equation}
 E\big[ T(0)\big] \approx \frac{1}{\varepsilon} \qquad \text{for } q=-\varepsilon, \;\varepsilon\ll 1\,.
 \end{equation}
 \item The average time needed to attain one of the boundaries can be made arbitrarily small, as long as the 
 parameter $q$ is chosen (negative) large enough:
 \begin{equation}
  \lim_{q \rightarrow - \infty}  E\big[ T(0)\big] = 0\,.
 \end{equation}
\end{itemize}

\section{An (apparent) physical paradox, and a really bounded noise}\label{Sec_second_order}

Apparently, from a physical point of view, this means that the material
point could eventually reach and overcome the boundaries of the
infinite-height well, as a pure consequence of sufficiently large stochastic
fluctuations. However, this apparent paradox has an easy explanation: the
paradox simply comes from the overdamped approximation, which in
this particular case led to an unphysical result. \\
As stressed by H{\"a}nggi and Jung \cite{HJ,hjucna}, the large friction approximation is equivalent 
to the condition of validity of the Smoluckowski approximation, which reads as follows \cite{Beck,hjucna,HJ}:
\begin{equation}\label{smoluch_rule0}
\gamma \gg \sqrt{D} \,\left| \frac{d}{dx}Log(|F(x)|) \right|,
\end{equation}
where $D$ is the diffusion coefficient, and $F(x)$ is the conservative force the point is subject to.
In our case, this yields:
\begin{equation}\label{smoluch_rule1}
1 \gg \sqrt{2(1-q)} \,\left| \frac{1}{x} + \frac{2x}{1-x^2} \right|.
\end{equation}
It is interesting to note that the constraint (\ref{smoluch_rule1}) is violated 
not only for $x$ close to $-1$ and $1$, as it is intuitive, but also close to $0$.\\
We are however going to show that the infinite potential boundaries cannot be overcome 
in the original full Newton's equation
\begin{equation}\label{x''}
m \ddot x=- \dot x+ \left( \frac{\sgn\left(
x-1\right) }{\left\vert x-1\right\vert }+\frac{\sgn\left( x+1\right) }{%
\left\vert x+1\right\vert }\right) + \sqrt{2 (1-q) }\,\xi ( t),
\end{equation}%
representing the motion of the point $P$ (see~(\ref{newton1})) under forces (\ref{FlTSBE}) and (\ref{FrTSBE}). 
In this regard, let the initial position
be $x\left( 0\right) \in \left( -1,1\right) $ and let it be any initial
velocity $v\left( 0\right) \in \left( -\infty ,\infty \right) $. Let us
prove that the barriers $\pm 1$ are never reached, independently of the
value of $q$ ($q<1$, of course).

\begin{theorem}\label{Thm:second_order}
For all $q<1$, the solution of~(\ref{x''}) with initial condition $x_0 \in I=(-1,1)$ exists globally in time, is unique
and is contained in $I$ for all times.
\end{theorem}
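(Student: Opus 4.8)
The plan is to rewrite the second‑order SDE~(\ref{x''}) as a first‑order system in the phase variables $(x,v)$ with $v=\dot x$, and to control the process by its physical energy. Writing $\beta=1-q>0$, the system reads $dx=v\,dt$, $m\,dv=\big(-v+F(x)\big)\,dt+\sqrt{2\beta}\,dB_t$, where $F(x)=-2x/(1-x^2)$ is smooth on $I=(-1,1)$; hence the coefficients are locally Lipschitz on the open phase domain $\mathcal D:=I\times\mathbb R$, and classical SDE theory yields a unique strong solution up to an explosion time $\zeta$, i.e.\ the first time the process leaves every compact subset of $\mathcal D$. It then remains only to prove $\zeta=\infty$ almost surely: this simultaneously gives global existence, uniqueness, and confinement of $x$ to $I$, since escaping $I$ (or having $|v|\to\infty$) would force $\zeta<\infty$.

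The candidate Lyapunov function is the total energy
\[
E(x,v)=\tfrac12 m v^2+U(x),\qquad U(x)=-\log(1-x^2),
\]
the potential of $F$. Two features are what make the argument work. First, $E\ge 0$ on $\mathcal D$, because $U\ge 0$ there. Second, the sublevel sets $\{E\le R\}$ are compact subsets of $\mathcal D$ and exhaust it: $E\le R$ forces $|v|\le\sqrt{2R/m}$ and $1-x^2\ge e^{-R}$, so $x$ stays bounded away from $\pm1$. A direct computation of the generator $\mathcal L=v\,\partial_x+\tfrac1m(-v+F(x))\,\partial_v+\tfrac{\beta}{m^2}\,\partial_v^2$ applied to $E$ produces the cancellation $vU'(x)-vU'(x)$, leaving
\[
\mathcal L E(x,v)=-v^2+\frac{\beta}{m}\;\le\;\frac{\beta}{m}=:C.
\]
Physically: friction dissipates energy at rate $v^2$ while the white‑noise force injects it at the constant rate $\beta/m$, so the expected energy grows at most linearly and never reaches the infinite‑energy configurations $x=\pm1$.

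With this in hand the remainder is routine. Set $\tau_n=\inf\{t<\zeta:\ E(x(t),v(t))\ge n\}$; since the sublevel sets exhaust $\mathcal D$, $\tau_n\uparrow\zeta$, and if $\zeta<\infty$ then $E(x(t),v(t))\to\infty$ as $t\to\zeta^-$. Applying It\^o's formula to $E$ up to $t\wedge\tau_n$, the stochastic integral $\sqrt{2\beta}\int_0^{t\wedge\tau_n}v\,dB_s$ is a genuine martingale (its integrand is bounded on $[0,\tau_n]$), so taking expectations and using $\mathcal L E\le C$ gives
\[
\mathbb E\big[E(x(t\wedge\tau_n),v(t\wedge\tau_n))\big]\le E(x_0,v_0)+C\,t.
\]
On $\{\tau_n<t\}$ the integrand on the left is $\ge n$, whence $\mathbb P(\tau_n<t)\le\big(E(x_0,v_0)+Ct\big)/n$. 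Letting $n\to\infty$ yields $\mathbb P(\zeta\le t)=0$ for every $t$, i.e.\ $\zeta=\infty$ almost surely, which is the claim.

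As for the main obstacle: once the right Lyapunov function is identified there is no deep difficulty, and the genuine conceptual point is exactly the one the paper emphasizes — one \emph{cannot} bound $x$ alone (the overdamped reduction already lets $x$ reach $\pm1$), so the velocity must be tracked jointly, and it is the coupling of kinetic and potential energy through the frictional dissipation that closes the estimate. The only places that demand a little care are verifying compactness of the sublevel sets inside the \emph{open} domain $\mathcal D$ and justifying that a finite explosion time would force $E\to\infty$; both are standard.
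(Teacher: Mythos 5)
Your proof is correct. It rests on the same central object as the paper's proof --- the total energy $E(x,v)=\frac m2 v^2-\log(1-x^2)$, the It\^o computation giving drift $-v^2+\beta/m\le\beta/m$, and a localization by stopping times --- but the way you conclude non-explosion is genuinely different and leaner. The paper, after the same energy balance, first extracts $\mathbb E\big[v^2(t\wedge T_{\max})\big]\le \frac{2E(0)}{m}+\frac{2\beta}{m^2}t$ via Fatou (having to \emph{define} $v^2(t\wedge T_{\max})$ as a liminf), then feeds this into Doob's maximal inequality and the It\^o isometry to bound $\mathbb E\big[\sup_{t<T\wedge T_{\max}}\big(\log|x(t)-1|+\log|x(t)+1|\big)^2\big]$, and concludes that the potential term stays a.s.\ finite, so the barriers cannot be reached; this requires a second layer of stopping times and monotone-convergence passages. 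You instead exploit that the sublevel sets $\{E\le n\}$ are compact exhausting subsets of the open phase domain $I\times\mathbb R$ (positivity of $U$ on $I$ and the bound $1-x^2\ge e^{-n}$ being the key facts), so the exit times $\tau_n$ from them increase to the explosion time, and a single first-moment estimate $\mathbb E\big[E(x(t\wedge\tau_n),v(t\wedge\tau_n))\big]\le E(x_0,v_0)+Ct$ combined with Chebyshev gives $\mathbb P(\tau_n\le t)\le \big(E(x_0,v_0)+Ct\big)/n\to 0$. This is the classical Khasminskii non-explosion test: it treats position and velocity jointly through the energy and dispenses with Doob's inequality, the second-moment estimate on $v$, and the liminf technicalities of the paper's Step 3, at the price of yielding only the qualitative conclusion (the paper's route additionally produces the explicit supremum bound). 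The points that need care --- compactness of the sublevel sets inside the open domain, $\tau_n\uparrow\zeta$, and the true-martingale property of $\int_0^{t\wedge\tau_n}\sqrt{2\beta}\,v\,dB$ because $|v|$ is bounded before $\tau_n$ --- are all correctly handled in your argument.
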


\begin{proof}
Before we start, we explain the idea. In the deterministic
case, recalled for convenience in Step~1, the global energy is decreasing, because of the viscosity. Since we start from an initial condition with finite
energy, the infinite potential barriers cannot be reached, because the
energy must remain finite. \\
The extension of this simple argument to the stochastic case requires a
proof. Indeed, the additive noise introduces energy in the average, as the
energy balance inequality~(\ref{energy balance}) shows. Thus one has to
prove that this injected energy is not sufficient to overcome the infinite
potential barriers. This is done in Step~2. One detail is however delicate,
namely taking expected value of the It\^{o} integral when we only know it is
a local martingale (namely we do not know a priori that the integrand is
square integrable in all variables). Step 2 is completed a little bit
formally by using the fact that this It\^{o} integral has zero expectation

Then, in Step~3, we show how to make it rigorous. 

\textbf{Step 1}. Let us rewrite equation~(\ref{x''}) in position-velocity coordinates:%
\begin{align}
\dot x& =v \\
m \dot v& =- v+  \left( \frac{\sgn\left( x-1\right) }{%
\left\vert x-1\right\vert }+\frac{\sgn\left( x+1\right) }{\left\vert
x+1\right\vert }\right) + \sqrt{2\beta }\,\xi (t) ,
\end{align}%
where we denote by $\beta$ the positive constant $1-q$.\\
Let us explain first the idea in the deterministic case $\beta =0$ (the
result in this case is well known). 
The potential energy, kinetic energy, and total energy read as follows: 
\begin{align}
U( x) & =- \left( \log \left\vert x-1\right\vert +\log \left\vert x+1\right\vert \right)  \\
K( v) & =\frac{m}{2}v^{2} \\
E(x,v) & = U(x) + K(v)
\end{align}%
respectively. Notice that%
\begin{equation}
U^{\prime }=- \left( \frac{\sgn\left( x-1\right) }{\left\vert
x-1\right\vert }+\frac{\sgn\left( x+1\right) }{\left\vert x+1\right\vert }%
\right) .
\end{equation}

Let $x\left( t\right) $ be a solution, with $x\left( 0\right) \in I$, $I=\left(
-1,1\right) $, defined on some interval $[0,T_{0})$ (a local in time unique
solution exists since the coefficients of the equation are locally Lipschitz
continuous on $I$). By classical arguments of analysis one
can consider the maximal interval of time $[0,T_{\max })$ where the solution
exists unique and belongs to $I$. We then have two possibilities for $T_{\max}$: 
either $T_{\max }=+\infty$, or $T_{\max }<\infty $ and $%
\lim_{t\rightarrow T_{\max }}x\left( t\right) $ is either 1 or -1. 
\\In order to prove that $T_{\max }=+\infty$, let us
show that, for all $t\in \lbrack 0,T_{\max })$, 
\begin{equation}\label{ineq4}
\min \left( \left\vert x\left( t\right) -1\right\vert ,\left\vert x\left(
t\right) +1\right\vert \right) \geq \exp \left( -E\left( 0\right)-\log 2\right) >0.
\end{equation}%
This implies $T_{\max }=+\infty $, because under inequality~(\ref{ineq4}) $%
\lim_{t\rightarrow T_{\max }}x\left( t\right) $ cannot be equal to 1 or -1.
\\
On $[0,T_{\max })$ we have $x\left( t\right) \in \left( -1,1\right) $. Since 
\begin{equation}
\frac{d}{dt}E=U^{\prime }\dot x+K^{\prime } \dot v=U^{\prime
} v + m v \left( -\frac{1 }{m}v-\frac{1}{m}U^{\prime }\right)
=- v^{2}\leq 0\,,
\end{equation}%
we have $E\left( t\right) \leq E\left( 0\right) <\infty $ and in particular, 
\begin{equation}
-\log \left\vert x\left( t\right) -1\right\vert -\log \left\vert x\left(
t\right) +1\right\vert \leq E\left( 0\right).
\end{equation}%
The property $x\left( t\right) \in \left( -1,1\right) $ implies $-\log
\left\vert x\left( t\right) +1\right\vert \geq -\log 2$, hence%
\begin{equation}
-\log 2-\log \left\vert x\left( t\right) -1\right\vert \leq E\left(
0\right)
\end{equation}%
which implies 
\begin{equation}
\left\vert x\left( t\right) -1\right\vert \geq \exp \left( -E\left(
0\right)-\log 2\right) .
\end{equation}%
The inequality $\left\vert x\left( t\right) +1\right\vert \geq \exp \left( -%
E\left(0\right)-\log 2\right) $ is similar. We have proved
the claim of the theorem in the deterministic case.

\textbf{Step 2}. Let us now give the proof in the stochastic case, $\beta
\neq 0$. As far as the solution has the property $x\left( t\right) \in \left(
-1,1\right) $, it lives in a region of $\left( x,v\right) $ space where the
coefficients of the equation are locally Lipschitz continuous. Hence a
unique maximal solution exists, maximal with the property $x\left( t\right)
\in \left( -1,1\right) $, on a random time interval $[0,T_{\max })$. We have
to prove that $P\left( T_{\max }=+\infty \right) =1$. When $T_{\max }<\infty $,
one has $\lim_{t\rightarrow T_{\max }}x\left( t\right) =\pm 1$.
\\
Let $\left( x,v\right) $ be the maximal solution, on $[0,T_{\max })$. By It%
\^{o} formula, on $[0,T_{\max })$, 
\begin{align}
\dot E& =U^{\prime }\dot x+K^{\prime }\dot v+\frac{1}{2}%
K^{\prime \prime }\frac{2\beta }{m^{2}} \nonumber \\
& =U^{\prime }v + m v \left( -\frac{1 }{m}v-\frac{1}{m}U^{\prime }+%
\frac{ \sqrt{2\beta }}{m}\xi \right) +m\frac{\beta }{m^{2}}
\nonumber \\
& =- v^{2}+\frac{\beta }{m}+\sqrt{2\beta }v\xi .
\end{align}%
Representing $\xi $ as the derivative of Brownian motion, $\xi =\dot B $, we get%
\begin{equation}
E\left( t\right) \leq E\left( 0\right) +\frac{\beta }{m}%
t+\int_{0}^{t} \sqrt{2\beta }\,v(s) dB(s) .
\label{energy balance}
\end{equation}%
In particular, always on $[0,T_{\max })$, the two following inequalities hold:
\begin{equation}
-\log \left\vert x\left( t\right) -1\right\vert -\log \left\vert x\left(
t\right) +1\right\vert \leq E\left( 0\right)+\frac{
\beta }{m}t+\int_{0}^{t}\sqrt{2\beta }\,v(s) dB(s)  
\label{ineq 1}
\end{equation}%
\begin{equation}
\frac{m}{2}v^{2}\left( t\right) \leq E\left( 0\right) +\frac{%
\beta }{m}t+\int_{0}^{t}\sqrt{2\beta }\,v(s) dB(s) .  \label{ineq 2}
\end{equation}%
From~(\ref{ineq 2}) we deduce (notice that $t\wedge T_{\max }\leq t$) 
\begin{equation}
E\left[ v^{2}\left( t\wedge T_{\max }\right) \right] \leq \frac{2E\left(
0\right) }{m}+\frac{2 \beta }{m^{2}}t  \label{est 1}
\end{equation}%
since the expectation of the It\^{o} integral is zero (the rigorous proof of
this inequality requires an argument of stopping times and thus it is
postponed to Step 3 below; also the proper definition of $v^{2}\left(
t\wedge T_{\max }\right) $ is given there).

Then, from~(\ref{ineq 1}) and Doob's inequality for the It\^{o}
integral, we deduce that, for each given deterministic time $T>0$,%
\begin{align}
& E\left[ \sup_{t\in \lbrack 0,T\wedge T_{\max })}\big( \log \left\vert
x\left( t\right) -1\right\vert +\log \left\vert x\left( t\right)
+1\right\vert \big) ^{2}\right]  \nonumber\\
& \leq 2\left( E\left( 0\right)+\frac{\beta }{m}%
T\right) ^{2}+16\beta \int_{0}^{T}E\left[ v^{2}(s) 1_{s\leq
T_{\max }}\right] ds \nonumber \\
&\leq 2\left( E\left( 0\right)+\frac{\beta }{m}%
T\right) ^{2}+16\beta T\left( \frac{2E\left( 0\right) }{m}+\frac{2\beta }{m^{2}}T\right) <\infty   \label{est 2}
\end{align}%
(also the proof of this claim is given in detail in Step 3 below). This
implies 
\begin{equation}
\sup_{t\in \lbrack 0,T\wedge T_{\max }))}\big( \log \left\vert x\left(
t\right) -1\right\vert +\log \left\vert x\left( t\right) +1\right\vert
\big) ^{2}<\infty 
\end{equation}%
with probability one. Then necessarily $T_{\max }=+\infty $, because in the
opposite case, from $\lim_{t\rightarrow T_{\max }}x\left( t\right) =\pm 1$,
the supremum would be infinite. We have completed the proof that $T_{\max
}=+\infty $, with probability one, which includes in particular the claim
that $x\left( t\right) \in \left( -1,1\right) $ for all $t\geq 0$, with
probability one.

\textbf{Step 3}. Let us prove (\ref{est 1}). Let $\tau _{n}$ be an
increasing sequence of finite stopping times which converges almost surely to $%
T_{\max }$ from below. Let $\sigma _{n}$ be the stopping time defined as%
\begin{equation}
\sigma _{n}=\inf \left\{ t\geq 0:\left\vert v(t) \right\vert
>n\right\} \wedge \tau _{n}
\end{equation}%
($\sigma _{n}=\tau _{n}$ if the set is empty). From inequality (\ref{ineq 2}%
) we have%
\begin{equation}
\frac{m}{2}v^{2}\left( t\wedge \sigma _{n}\right) \leq E\left( 0\right)
+\frac{\beta }{m}\left( t\wedge \sigma _{n}\right)
+\int_{0}^{t} \sqrt{2\beta }\,v(s) 1_{s\leq \sigma
_{n}}dB( s) .
\end{equation}%
Since $\left\vert v( s) 1_{s\leq \sigma _{n}}\right\vert \leq n$,
the It\^{o} integral above is a martingale and thus its average is zero.
Therefore, since $t\wedge \sigma _{n}\leq t$, 
\begin{equation}
\frac{m}{2}E\left[ v^{2}( t\wedge \sigma _{n}) \right] \leq
E\left( 0\right) +\frac{\beta }{m}t
\end{equation}%
namely $E\left[ v^{2}( t\wedge \sigma _{n}) \right] \leq \frac{%
2E\left( 0\right) }{m}+\frac{2 \beta }{m^{2}}t$. By Fatou lemma%
\begin{equation}
E\left[ \underset{n\rightarrow \infty }{\lim \inf }v^{2}\left( t\wedge
\sigma _{n}\right) \right] \leq \frac{2E\left( 0\right) }{m}+\frac{2\beta }{m^{2}}t.  \label{est 3}
\end{equation}%
One can check that $\lim_{n\rightarrow \infty }\sigma _{n}=T_{\max }$. If $%
T_{\max }=\infty $, $\underset{n\rightarrow \infty }{\lim \inf }v^{2}(
t\wedge \sigma _{n}) =v^{2}(t) $. If $T_{\max }<\infty $,
namely when $\lim_{n\rightarrow \infty }t\wedge \sigma _{n}=t\wedge T_{\max }
$, we do not know a priori that $v$ can be prolonged with continuity at time 
$T_{\max }$ (the solution $\left( x,v\right) $ is defined only on the
maximal interval $[0,T_{\max })$). But $\underset{n\rightarrow \infty }{\lim
\inf }v^{2}( t\wedge \sigma _{n}) <\infty $ with probability one,
by (\ref{est 3}). Thus we define $v^{2}( t\wedge T_{\max }) $ as
this lim inf. Thus (\ref{est 3}) is the correct meaning of (\ref{est 1}).
\\[1mm]
Let us now prove (\ref{est 2}). Let $\sigma _{n}^{\prime }$ be the stopping
time defined as%
\begin{equation}
\sigma _{n}^{\prime }=\inf \left\{ t\geq 0:\left\vert x\left( t\right)
\right\vert >1-\frac{1}{n}\right\} \wedge \sigma _{n}
\end{equation}%
($\sigma _{n}^{\prime }=\sigma _{n}$ if the set is empty). From (\ref{ineq 1}%
) we have (since $t\wedge \sigma _{n}^{\prime }\leq t$)%
\begin{equation}
-\log \left\vert x\left( t\wedge \sigma _{n}^{\prime }\right) -1\right\vert
-\log \left\vert x\left( t\wedge \sigma _{n}^{\prime }\right) +1\right\vert
\leq E\left( 0\right)+\frac{ \beta }{m}t+\int_{0}^{t}%
\sqrt{2\beta }\,v(s) 1_{s\leq \sigma _{n}^{\prime }}dB(s) .
\end{equation}%
Again $\left\vert v( s) 1_{s\leq \sigma _{n}^{\prime
}}\right\vert \leq n$, so $\int_0^t v(s) 1_{s\leq \sigma _{n}^{\prime }}dB(s)$ is a martingale. 
Hence, by Doob's inequality and It\^{o} isometry formula,%
\begin{equation}
E\left[ \sup_{t\in \left[ 0,T\right] }\left( \log \left\vert x\left( t\wedge
\sigma _{n}^{\prime }\right) -1\right\vert +\log \left\vert x\left( t\wedge
\sigma _{n}^{\prime }\right) +1\right\vert \right) ^{2}\right] \leq 2\left( 
E\left( 0\right)+\frac{\beta }{m}T\right)
^{2}+16\beta \int_{0}^{T}E\left[ v^{2}(s) 1_{s\leq \sigma
_{n}^{\prime }}\right] ds.  \label{ineq 3}
\end{equation}%
One can check that $\lim_{n\rightarrow \infty }\sigma _{n}^{\prime }=T_{\max
}$. By monotone convergence, the right-hand-side of~(\ref{ineq 3}) converges to 
\begin{equation}
2\left( E\left( 0\right)+\frac{\beta }{m}T\right)
^{2}+16\beta \int_{0}^{T}E\left[ v^{2}\left( s\right) 1_{s\leq T_{\max }}\right] ds\,.
\end{equation}%
Moreover
\begin{equation}
E\left[ v^{2}\left( s\right) 1_{s\leq T_{\max }}\right] \leq E\left[
v^{2}\left( s\right) 1_{s\leq T_{\max }}\right] +E\left[ v^{2}\left( T_{\max
}\right) 1_{s>T_{\max }}\right] =E\left[ v^{2}\left( s\wedge T_{\max
}\right) \right] ,
\end{equation}%
thus
\begin{align}
16\beta \int_{0}^{T}E\left[ v^{2}\left( s\right) 1_{s\leq T_{\max }} \right] ds
\leq 16\beta \int_{0}^{T}E\left[ v^{2}\left( s\wedge T_{\max }\right) \right] ds
\leq 16\beta T\left( \frac{2E\left( 0\right) }{m}+\frac{2\beta }{m^{2}}T\right) 
\end{align}%
by~(\ref{est 1}). 
We have therefore seen that the right-hand-side of~(\ref{ineq 3}) is bounded by
\begin{equation}
2\left( E\left( 0\right)+\frac{\beta }{m}T\right)^{2}
+ 16\beta T\left( \frac{2E\left( 0\right) }{m}+\frac{2 \beta }{m^{2}}T\right) \,.
\end{equation}
It remains to understand the limit of the left-hand-side
of~(\ref{ineq 3}). One has%
\begin{align}
& \sup_{t\in \left[ 0,T\right] }\big( \log \left\vert x\left( t\wedge
\sigma _{n}^{\prime }\right) -1\right\vert +\log \left\vert x\left( t\wedge
\sigma _{n}^{\prime }\right) +1\right\vert \big) ^{2} \\
& =\sup_{t\in \lbrack 0, T\wedge \sigma _{n}^{\prime })}\big( \log \left\vert
x\left( t\right) -1\right\vert +\log \left\vert x\left( t\right)
+1\right\vert \big) ^{2},
\end{align}%
hence again we may apply monotone convergence and get that the
left-hand-side of (\ref{ineq 3}) converges to%
\begin{equation}
E\left[ \sup_{t\in \lbrack 0,T\wedge T_{\max })}\big( \log \left\vert x\left(
t\right) -1\right\vert +\log \left\vert x\left( t\right) +1\right\vert
\big) ^{2}\right] .
\end{equation}%
This proves (\ref{est 2}) and completes the proof of the theorem.
\end{proof}

The above Theorem~\ref{Thm:second_order}, despite the quite technical proof, unequivocally shows the following: 
under the action of the above-described potential $U(x)$ as well as the viscous force, and by fully taking into account the mass $m$ of the point P,
the motion remains bounded independently of the particular value of the parameter $q<1$.
For the sake of completeness, it is worth mentioning that the stationary probability density 
of the vectorial process $(x,v)$ is in the classical Boltzmann-like form:
\begin{equation}
\rho_{ss}(x,v)= A \times \exp \left[ -\frac{1}{1-q}\Big( U(x) + K(v) \Big) \right] ,
\end{equation}
yielding:
\begin{equation}
 \rho_{ss}(x,v)= A \times e^{ -\frac{1}{1-q} \frac{m v^2}{2}}(1-x^2)^{1/(1-q)} .
\end{equation}

\section{A parametric extension of the TSB model leading to a first order SDE
with bounded solutions} \label{Sec_alpha_family}

In this section we propose a one-parameter family of noises that includes as
particular case the TSB noise. Indeed, let us consider the following family
of forces $\varphi^{\alpha}(x) $ which depend on a parameter $\alpha>0$: 
\begin{equation}
\varphi^{\alpha}(x) = \varphi_l^{\alpha}(x) + \varphi_r^{\alpha}(x),
\end{equation}
where:
\begin{equation}  \label{FlTSBEalpha}
\varphi_l^{\alpha}(x)=\frac{\sgn(x+1)}{|x+1|^{\alpha}};
\end{equation}
\begin{equation}\label{FrTSBEalpha}
\varphi_r^{\alpha}(x) = \frac{\sgn(x-1)}{|x-1|^{\alpha}};
\end{equation}
According to the notations of Section~\ref{Sec_basic}, the corresponding full Newton's equation
and the overdamped approximated equation ($m \ll 1$) read as follows, respectively:
\begin{equation}\label{newton_alpha}
 m \ddot x =- \dot x  + \varphi^{\alpha}(x) +  \sqrt{2(1-q)} \,\xi(t),
\end{equation}
and
\begin{equation}\label{SDE_alpha}
 \dot x  = \varphi^{\alpha}(x) + \sqrt{2(1-q)} \xi(t)\,.
\end{equation}
The potential associated with~(\ref{newton_alpha}) takes therefore the following form: 
\begin{equation}\label{U_alpha}
U^{\alpha}(x) =
\begin{cases}
\frac{1}{\alpha-1} \left(|x+1|^{1-\alpha}+|x-1|^{1-\alpha} \,-2\right) & %
\mbox{if $\alpha \neq 1$ and $\alpha>0$} \\[2mm]
- \log(1-x^2) & \mbox{if } \alpha =1
\end{cases}
\end{equation}
where the constant has been chosen so that $U^{\alpha}(0)=0$ for all $\alpha$.
\begin{figure}[t!]
\centering
\includegraphics[scale=0.5]{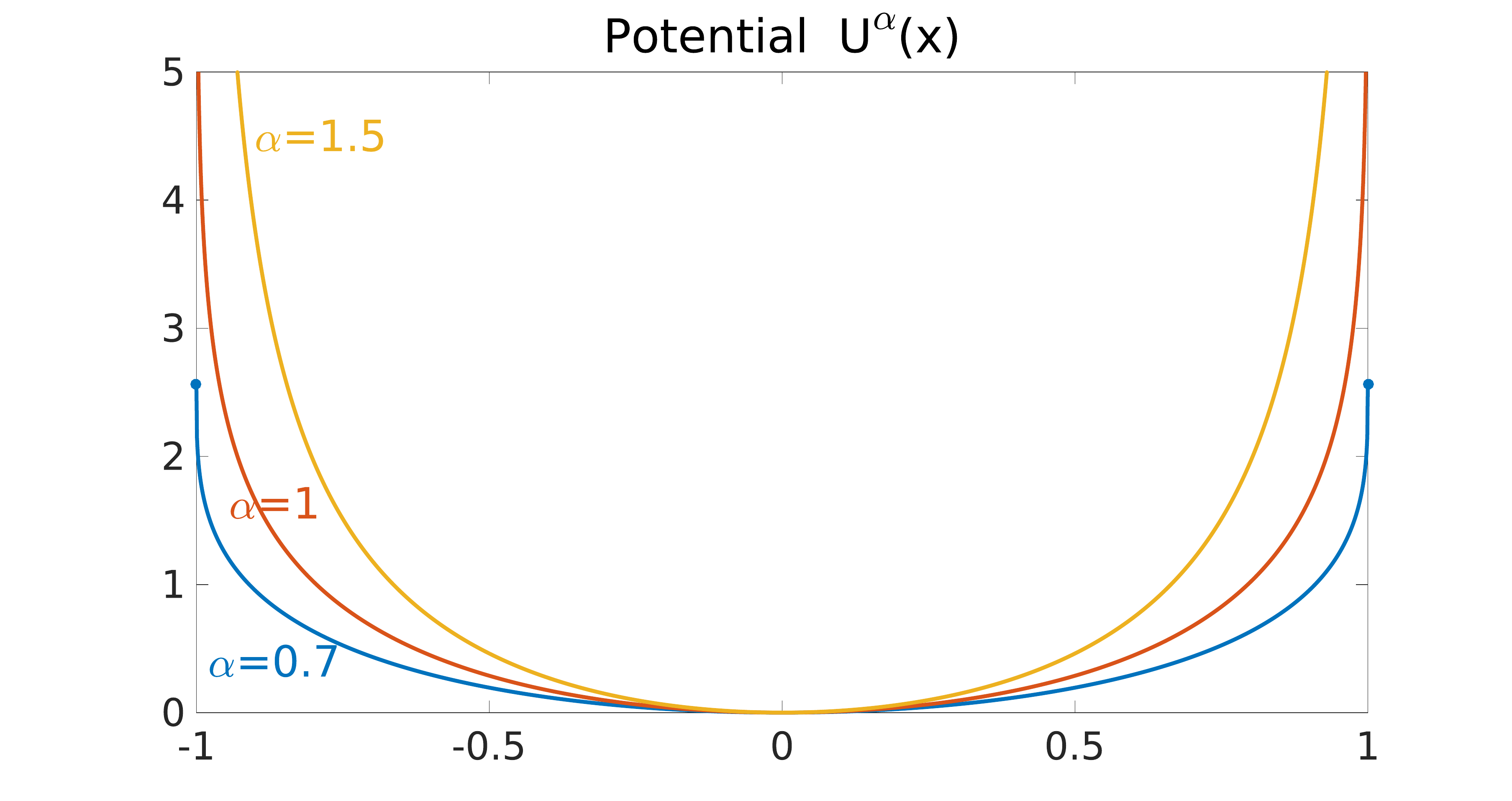}
\caption{Graph of the potential $U^\alpha$ in~(\ref{U_alpha}) for $\alpha=0.7$, $\alpha=1$, and $\alpha=1.5$. 
The potential is bounded for $\alpha<1$, and unbounded for $\alpha \geq 1$. However, the case $\alpha=1$ only yields
a logarithmic growth near the boundaries, while the case $\alpha>1$ yields a polynomial growth.}
\label{fig_potential_alpha}
\end{figure}
Notice that, for $\alpha \in (0,1)$, the above potential is
bounded for finite values of $x$, so it does not form a potential well (Figure~\ref{fig_potential_alpha}, blue line). 
\newline
The case $\alpha =1$ has been studied in detail in Sections~\ref{Sec_q<0}~and~\ref{Sec_second_order}: the
potential forms in fact an infinite well but, in the case of the approximated equation~(\ref{SDE_alpha}), 
the behavior of the solution further depends on the parameter $q$.
As far as the full Newton's equation is concerned, instead, the solution remains bounded independently of the value of $q$.
\newline
Finally, the case $\alpha>1$ remains to be investigated.
Preliminarily, we note that also in this case the potential $U^{\alpha}(x)$ forms
an infinite potential well, as illustrated in Figure~\ref{fig_potential_alpha}.
Second, we note that the proof of Theorem~\ref{Thm:second_order} can be used to show that
the family of Newton's equations in~(\ref{newton_alpha}), i.e.
\begin{equation}
m \ddot x=- \dot x  +  \frac{\sgn(x+1)}{\left\vert
x+1\right\vert ^{\alpha}} +  \frac{ \sgn(x-1)}{\left\vert x-1\right\vert ^{\alpha}}+%
\sqrt{2(1-q)}\,\xi( t) ,
\end{equation}
gives rise to solutions which never leave the interval $I=(-1,1)$ for all positive times.
Thus, in the case $\alpha>1$, it remains to check whether the same holds true also for the first order SDE~(\ref{SDE_alpha}).

However, by looking back at the proof of Theorem~\ref{q>0}, we see that a sufficient condition for the process
not to reach the boundaries of its state space is that the scale function $s^\alpha(x)$ associated with~(\ref{SDE_alpha}) explode at the boundaries.
So, let us simply check that $\vert s^\alpha(\pm 1) \vert = \infty$ under the hypothesis $\alpha>1$, where
\begin{equation}
 s^\alpha(x)= \int_0^x \exp \left( - \int_0^y 2\:\frac{\varphi^\alpha(z)}{\sigma^2(z)}\;dz\;\right)\,dy\,.
\end{equation}
For the sake of simplicity, define $\beta=1-q$. We have:
\begin{align}
 s^\alpha(1) &= \int_0^1 \exp \left( - \frac{1}{\beta} \int_0^y \varphi^{\alpha}(z) \;dz\;\right)dy \nonumber \\[2mm]
             &= \int_0^1 \exp \left( \frac{U^{\alpha}(y)}{\beta} \;\right)dy \nonumber \\[2mm]
             &\!\!\stackrel{(\ref{U_alpha})}{=} e^{-2C} \int_0^1 \exp \left[ \frac{C}{ {(1+y)}^{\alpha-1} } +\frac{C}{ {(1-y)}^{\alpha-1} } \right]dy\,,
             \qquad C = \frac{1}{\beta(\alpha-1)} \nonumber \\[2mm]
             &\geq e^{-2C} \int_0^1 \exp \left[ \frac{C}{ {(1-y)}^{\alpha-1} } \right]\,dy = + \infty\,,
\end{align}
where the last equality precisely holds because $\alpha>1$. Likewise, $s^\alpha(-1) = -\infty$. \\
So, we can safely conclude that the boundaries $\pm 1$ are not reached if $\alpha>1$, 
and that the process remains therefore bounded in this case, without any further assumption on the magnitude of the constant diffusion $\sigma$.
In particular, when $\alpha>1$, the process $x^\alpha(t)$ solution of~(\ref{SDE_alpha}) is ergodic for any value of the parameter $q<1$.
Its stationary density can be easily derived as a time-invariant solution of the Fokker-Planck equation, which (up to normalisation constant) yields
\begin{align}\label{stat_sensity_alpha}
p^\alpha(x;q) &= \exp \left( -\frac{U^\alpha(x)}{1-q} \right) \nonumber\\
            &= \exp \left[ -\frac{1}{(1-q) (\alpha-1)} \left(\frac{1}{{(1+x)}^{\alpha -1}} + \frac{1}{{(1-x)}^{\alpha -1}} \right) \right], \quad \alpha>1\,.
\end{align}
The mass of this density moves away from the boundaries of $I=(-1,1)$ as the value of $\alpha$ increases, as Figure~\ref{fig_density_alpha} shows. 

\begin{figure}[t!]
\centering
\includegraphics[scale=0.5]{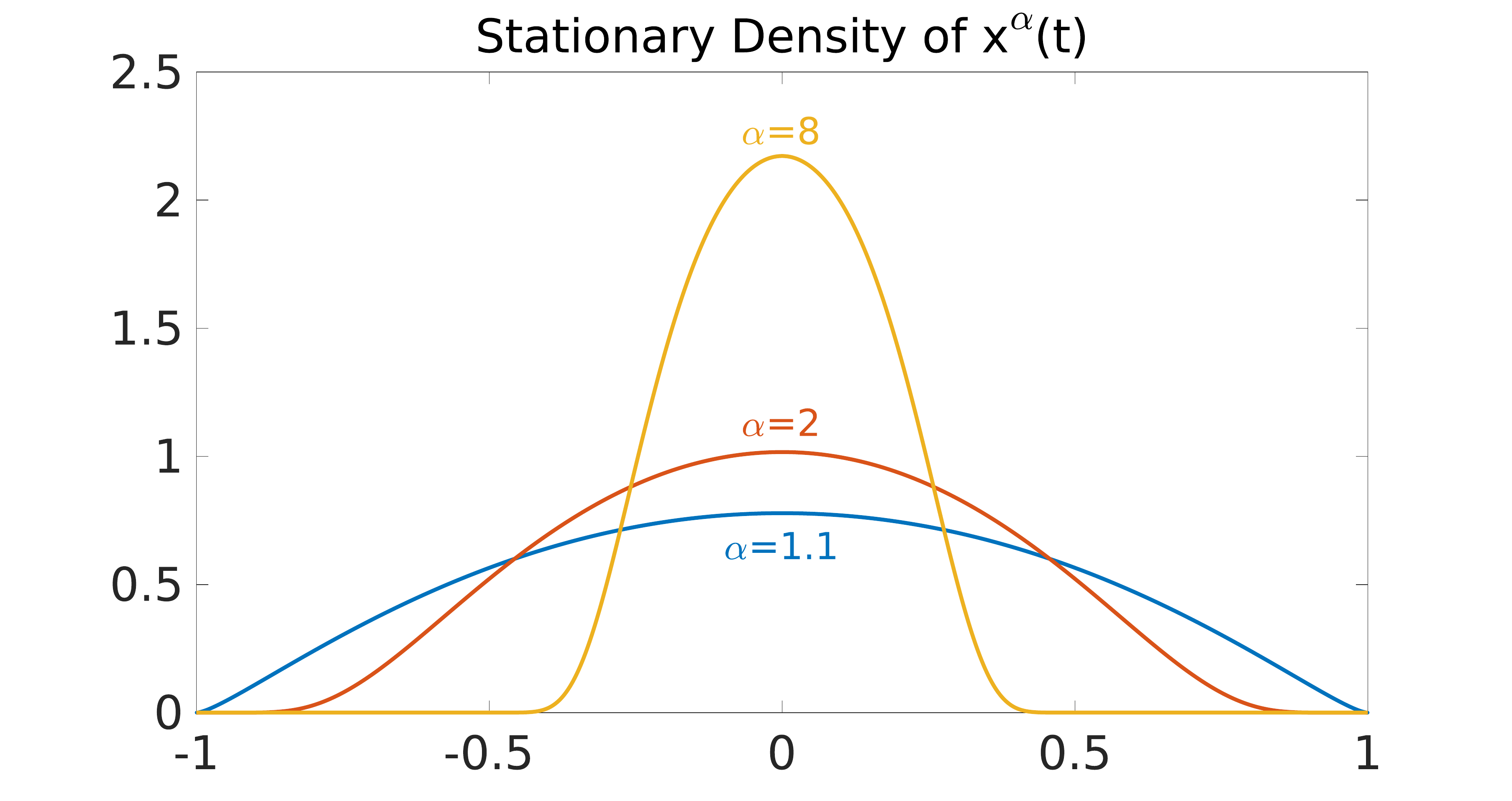}       
\caption{Normalised stationary density of the process solution of~(\ref{SDE_alpha}), for three different values of $\alpha>1$. 
The value $q=0$ has been chosen.
Notice how a lower mass is attached to the boundaries $\pm 1$ as long as $\alpha$ 
increases from just above 1 to greater values.}
\label{fig_density_alpha}
\end{figure}
\noindent
Summarizing the results of this section, from an heuristic point of view we may
say that the case $\alpha=1$ is at the interface between potentials with and without
infinite wells, which yield bounded and unbounded solutions, respectively.
The potential $U^{\alpha=1}(x)$ is itself infinite at $x=\pm 1$,
but its growth is very slow since it is logarithmic. 
As a consequence, the boundedness of the solutions obtained under the overdamped approximation
depends on the magnitude of the stochastic perturbation the point $P$ of small mass is subject to.

\section{Concluding Remarks}\label{Sec_conclusions}

In this work we have showed that the Tsallis-Stariolo-Borland SDE, despite having an
apparent stringent physical interpretation of overdamped stochastic motion
of a point $P$ in an infinite potential well, is able to generate unbounded noises for sufficiently large
diffusion coefficient (namely for negative Tsallis parameter $q$). The explanation of this
apparently unphysical and anti-intuitive behavior is that the overdamped
first order SDE is a result of the overdamped approximation,
which fails in our case.
Indeed, we have showed that the full Newtonian equation describing the motion of the material
point $P$ is able to generate a genuinely bounded stochastic process for the
position $x(t)$ of the particle.\\ 
We have also showed that the TSB case is at the
interface between finite and infinite potential wells in a family of SDEs
with potentials $U^{\alpha}(x)$ which depend on a real positive parameter $\alpha$. 
The properties of this family suggest that TSB potential might be a mathematical artifact
separating two more physical scenarios, where for $\alpha \in (0,1)$ the
motion is unbounded due to the boundedness of the potential for finite $x$,
and for $\alpha>1$ the motion is bounded both in presence and in absence of the overdamped approximation.

\bibliographystyle{unsrt}
\bibliography{Refer}

\end{document}